\newtheorem{theorem}{Theorem}
\newtheorem{corollary}{Corollary}
\newtheorem{conjecture}{Conjecture}
\theoremstyle{definition}
\newtheorem{definition}{Definition}
    \renewcommand{\ref}[1]{\mbox{\autoref{#1}}}
\newcommand{\problemdef}[3]{
  \begin{center}
    \begin{minipage}{0.95\textwidth}
      \noindent
      \textsc{#1}

      \vspace{2pt}
      \setlength{\tabcolsep}{3pt}
      \begin{tabularx}{\textwidth}{@{}lX@{}}
        \textbf{Input:} 		& #2 \\
        \textbf{Question:} 	& #3
      \end{tabularx}
    \end{minipage}
  \end{center}
}
\newcommand{\PVG}{\text{PVG}}
\newcommand{\PVGS}{\text{PVGs}}
\begin{document}

\title{Computational Complexity Aspects of Point Visibility Graphs}

\author[1]{Anne-Sophie Himmel}

\author[1]{Clemens Hoffmann}

\author[1]{Pascal Kunz}

\author[1]{Vincent Froese}

\author[1,2]{Manuel~Sorge\footnote{Supported by the People Programme (Marie Curie Actions) of the European Union's Seventh Framework Programme (FP7/2007-2013) under REA grant agreement number 631163.11 and by the Israel Science Foundation (grant no. 551145/14).}}

\affil[1]{Institut für Softwaretechnik und Theoretische Informatik, TU Berlin, Berlin,\newline \texttt{vincent.froese@tu-berlin.de}}

\affil[2]{Department of Industrial Engineering and Management, Ben Gurion University of the Negev, Beer~Sheva, Israel, \texttt{sorge@post.bgu.ac.il}}

\date{}

\maketitle

\begin{abstract}
  A point visibility graph is a graph induced by a set of points in the plane where the vertices of the graph represent the points in the point set and two vertices are adjacent if and only if no other point from the point set lies on the line segment between the two corresponding points. The set of all point visibility graphs form a graph class which is examined from a computational complexity perspective in this paper. We show NP-hardness for several classic graph problems on point visibility graphs such as \textsc{Feedback Vertex Set}, \textsc{Longest Induced Path}, \textsc{Bisection} and \textsc{$\mathcal{F}$-free Vertex Deletion} (for certain sets~$\mathcal{F}$).
Furthermore, we consider the complexity of the \textsc{Dominating Set} problem on point visibility graphs of points on a grid.
\end{abstract}

\section{Introduction}
Visibility graphs are a way to encode the information that certain objects are visible from one another or not. The objects correspond to the vertices and there is an edge between two vertices if and only if the two corresponding objects are visible from each other, for some specified definition of visibility. Different kinds of visibility graphs have been studied, like rectangle visibility graphs~\cite{Pet16}, hypercube visibility graphs~\cite{PW17}, 
segment visibility graphs~\cite{GG13}, and polygon visibility graphs (bearing a slightly different meaning). They find their application in many real world problems, for example, in computing Euclidean shortest paths in the presence of obstacles in the field of robotics \cite{deBerg2008}, decomposition of two-dimensional shapes by graph theoretic clustering in the field of object recognition \cite{Shapiro1979}, or even in the diagnosis of Alzheimer’s disease \cite{Ahmadlou2010}. In point visibility graphs (PVGs) our objects are simply points in the plane and two points are \emph{visible} if there is a direct line between them, that is, a line that does not contain any other point. PVGs can be thought of the extreme case of other visibility graphs when our viewpoint on the objects is far away. In this case the shapes and diameter of the objects become negligible as the objects shrink to points. The visibility relation between points of a point set in the plane (represented by PVGs) is thus a fundamental structure in computational geometry~\cite{deBerg2008}.

In this work, we adopt an algorithmic perspective on the class of point visibility graphs.
In doing so, we intend to bring this practically relevant graph class to the attention of a broader audience in order to motivate research focused on solving computational graph problems for this graph class.
We start by studying several classical graph problems that are NP-complete in general and investigate whether they become polynomial-time solvable on PVGs. It turns out that many of the problems remain NP-complete on point visibility graphs.

\subsection{Preliminaries and Properties of Point Visibility Graphs}
All graphs in this paper are undirected, without self-loops or parallel edges. We use standard graph notation (see, e.g. \citet{Die16}). We start with the definition of point visibility graphs. 
\begin{definition}
  A graph $G=(V,E)$ with $V=\{v_1,\ldots, v_n\}$ is a \emph{point visibility graph} (\PVG), if there exists a set of points $P=\{p_1,\ldots,p_n\}$ in the plane (each point $p_i$ corresponds to vertex $v_i$) such that $\{v_i,v_j\}\in E$ if and only if there exists no other point in~$P$ on the line segment between $p_i$ and $p_j$, that is, $p_i$ and $p_j$ are \emph{visible} to each other.
  The point set~$P$ is also called the \emph{visibility embedding} of~$G$.
\end{definition}

In short, a \PVG\ is a graph that has a visibility embedding in the plane.
\autoref{figure:PVG} shows a \PVG\ denoted $G$ and its visibility embedding in the plane.

We can divide the graph class of \PVGS\ into two subclasses: Paths and non-path graphs.
Every path is a \PVG~and it holds that the diameter is equal to the length of the path.
On the other hand, every non-path \PVG\ has diameter two~\cite{kara2005chromatic}.
For non-path \PVGS, this is due to the fact that for each pair of points in the plane that are not visible to each other there exists a third point that can see both: The point closest to the line segment between the two points.
Another property of interest of non-path \PVGS\ is that they always have a Hamiltonian cycle.
Intuitively, it can be found in a visibility embedding of a non-path \PVG\ in polynomial time by going from the outermost to the innermost convex hull of the points in the embedding (see \cite[Theorem~1]{Ghosh201517} for the details). 

 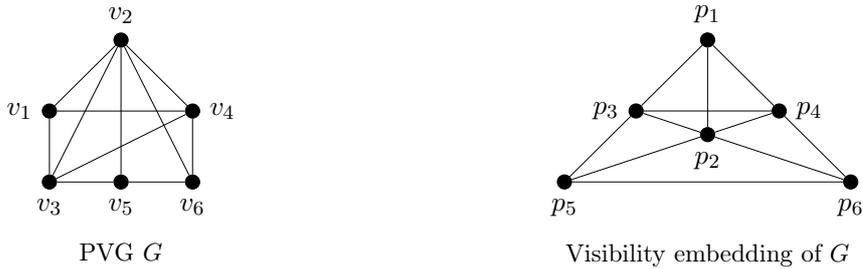
\begin{figure}[t]
     \centering
     \resizebox{\linewidth}{!}{
     \begin{tabular}{cc}
     	\begin{subfigure}{.5\linewidth}
        	\centering
     		\begin{tikzpicture}[scale=1]
                        \tikzstyle{knoten}=[circle,draw,fill,minimum size=5pt,inner sep=2pt]
                        \node[knoten,label={[label distance=0cm]180:$v_1$}] (K-1) at (0,0) {};
                        \node[knoten,label={[label distance=0cm]90:$v_2$}] (K-2) at (1,1) {};
                        \node[knoten,label={[label distance=0cm]0:$v_4$}] (K-4) at (2,0) {};
                        \node[knoten,label={[label distance=0cm]270:$v_3$}] (K-3) at (0,-1) {};
                        \node[knoten,label={[label distance=0cm]270:$v_5$}] (K-5) at (1,-1) {};
                        \node[knoten,label={[label distance=0cm]270:$v_6$}] (K-6) at (2,-1) {};
                        \foreach \i / \j in {1/3,1/4,1/2,2/3,2/4,2/5,2/6,3/4,3/5,4/6,5/6}{
                            \path (K-\i) edge[-] (K-\j);
                        }
           	\end{tikzpicture}
           	\caption{\PVG~$G$}
        	\end{subfigure}
            &
			\begin{subfigure}{.5\linewidth}
                     \centering
                     \begin{tikzpicture}[scale=1] 
                     \tikzstyle{knoten}=[circle,draw,fill,minimum size=5pt,inner sep=2pt]
                     \node[knoten,label={[label distance=0cm]90:$p_1$}] (K-1) at (2,2) {};
                     \node[knoten,label={[label distance=0cm]270:$p_2$}] (K-2) at (2,0.66666) {};
                     \node[knoten,label={[label distance=0cm]180:$p_3$}] (K-3) at (1,1) {};
                     \node[knoten,label={[label distance=0cm]0:$p_4$}] (K-4) at (3,1) {};
                     \node[knoten,label={[label distance=0cm]270:$p_5$}] (K-5) at (0,0) {};
                     \node[knoten,label={[label distance=0cm]270:$p_6$}] (K-6) at (4,0) {};
                     \foreach \i / \j in {1/3,1/4,1/2,2/3,2/4,2/5,2/6,3/4,3/5,4/6,5/6}{
                            \path (K-\i) edge[-] (K-\j);
                        }
                        \end{tikzpicture}
            \caption{Visibility embedding of $G$}
			\end{subfigure}
            \end{tabular}}
            \caption{A point visibility graph~$G$ and its visibility embedding in the plane.}
            \label{figure:PVG}
        \end{figure}

\subsection{Related Work}
Structural properties of \PVGS\ have been thoroughly researched. %
\citet{Ghosh201517} have shown among other results that non-path \PVGS\ always have a Hamiltonian cycle. 
\citet{kara2005chromatic} gave results on the chromatic number of \PVGS. They showed that for \PVGS\ with clique size two and three, the chromatic number is two and three, respectively, and characterize those \PVGS\ that are 2- and 3-colorable. It follows that \textsc{3-Colorability} is polynomial-time solvable on \PVGS\ when a visibility embedding is given.
On the other side, \citet{DR17} showed that it is NP-hard to decide whether a given \PVG\ is $k$-colorable for $k \geq 5$.
Recently, the case of $k = 4$ was shown to be polynomial-time solvable (on a given visibility embedding)~\cite{DR17a}.
Furthermore, \citet{Pfender2008} showed that for \PVGS\ with clique size six, the chromatic number can be arbitrary large. 
 
As regards other graph problems, \citet{Ghosh201517} showed that \textsc{Vertex Cover}, \textsc{Independent Set} and \textsc{Maximum Clique} remain NP-hard on \PVGS.
Notably, the recognition problem for \PVGS\ was shown by \citet{CH17} to be complete for the existential theory of the reals~$\exists\mathbb{R}$ (which implies NP-hardness; note also that $\text{NP} \subseteq \exists\mathbb{R}\subseteq \text{PSPACE}$~\cite{CH17}).

Besides \PVGS\, the structure of other visibility graphs like line segments or polygons has also been investigated~\cite{Rourke:1987, GG13}.

\subsection{Our Contribution and Organization}
We study the complexity of several classical computational graph problems on the graph class of \PVGS. Due to the fact that \PVGS\ are Hamiltonian, problems like \textsc{Longest Path} and \textsc{Hamilton Path} are trivial. Nevertheless, many graph problems remain NP-hard on \PVGS.
In \autoref{section:nphardness} we prove NP-hardness of \textsc{Feedback Vertex Set}, \textsc{Longest Induced Path}, \textsc{Bisection}, and a restricted version of \textsc{$\mathcal{F}$-free Vertex Deletion} on \PVGS.
Herein, we build upon a general reduction idea introduced by \citet{Ghosh201517}.
In \autoref{section:dominatingset}, we briefly discuss the \textsc{Dominating Set} problem and show that it is unlikely to be NP-hard at least on a subclass of point visibility graphs.
We close in \autoref{section:outlook} by pointing to some open questions.
\section{NP-hardness Results}
\label{section:nphardness}

In this section we prove that several graph problems remain NP-hard when restricted to \PVGS.
Our hardness results follow from a transformation (mapping arbitrary graphs to \PVGS) that was first introduced by \citet{Ghosh201517} to show NP-hardness of \textsc{Vertex Cover}, \textsc{Independent Set} and \textsc{Maximum Clique} on \PVGS\ and was already used in complexity studies of other point set problems~\cite{FKNN17}. 
This transformation, henceforth called~$\Phi$, allows us to prove NP-hardness for the following problems on \PVGS: \textsc{Feedback Vertex Set}, \textsc{Longest Induced Path}, \textsc{Bisection} and \textsc{$\mathcal{F}$-free Vertex Deletion} for certain sets $\mathcal{F}$ (we are not aware of any other results concerning these problems on Hamiltonian graphs). The formal definition is as follows.

\begin{definition}[Transformation~$\Phi$]
Given a graph $G=(V,E)$, we add a vertex~$b_{uv}$ for every vertex pair~$u\neq v\in V$ with~$\{u,v\}\not\in E$ and we connect~$b_{uv}$ to all vertices in~$V$.
We will call~$b_{uv}$ a \emph{blocker} because it blocks the visibility between~$u$ and~$v$ in a corresponding visibility embedding.
In total, $\binom{|V|}{2}-|E|$ blockers are added to~$G$.
Finally, all blockers are connected to each other to obtain~$\Phi(G)$.
\end{definition}%
A small example of the transformation $\Phi$ is shown in \ref{fig:transformationExample}. It is not hard to see that the resulting graph~$\Phi(G)$ can be computed in polynomial time. Keep in mind below that all added blockers in~$\Phi(G)$ form a clique and that the graph~$G$ is an induced subgraph of~$\Phi(G)$.
Furthermore, $\Phi(G)$ is always a \PVG~\cite{Ghosh201517, FKNN17}. A proof sketch is as follows.
Let~$v_1,\ldots,v_n$ be~$n$ distinct points in general position (e.g.\ on a circle) corresponding to the vertices of~$G$.
We can now add blockers inductively as follows:
For a non-edge~$\{u,v\}\not\in E$ consider the line segment~$uv$ defined by~$u$ and~$v$. It is clear that we can always choose a blocker~$b_{uv}$ on this segment such that~$b_{uv}$ is not lying on any other line defined by any two other points introduced so far, since there are only finitely many intersection points of these lines with~$uv$.
Using a similar argument, we can show that, for each PVG, we can add a \emph{universal} vertex~$u$, a vertex connected to all other vertices, while maintaining the PVG property. We use this observation below.

In the following, we will use the transformation $\Phi$ in polynomial-time reductions to prove NP-hardness for the above mentioned problems. It is clearly computable in polynomial time, since it only involves adding a polynomial number of vertices and edges to the input graph in a trivial way.

\begin{figure}[t]
    \centering
        \resizebox{\linewidth}{!}{
            \begin{tabular}{ccc}
            	\begin{subfigure}{.5\linewidth}
                \centering
                \begin{tikzpicture}[scale=1]
                \tikzstyle{knoten}=[circle,draw,fill,minimum size=5pt,inner sep=2pt]
                \node[knoten,label={[label distance=0cm]180:$v_1$}] (K-1) at (0,0) {};
                \node[knoten,label={[label distance=0cm]0:$v_2$}] (K-2) at (2,0) {};
                \node[knoten,label={[label distance=0cm]00:$v_3$}] (K-3) at (2,2) {};
                \node[knoten,label={[label distance=0cm]180:$v_4$}] (K-4) at (0,2) {};
                \foreach \i / \j in {1/4, 4/3, 2/3, 1/3}{
                    \path (K-\i) edge[-] (K-\j);
                }
                \end{tikzpicture}
                \caption{General graph $G$}
                \end{subfigure}
                &

                \begin{subfigure}{.5\linewidth}
                	\centering
                    \begin{tikzpicture}[scale=1]
                    \tikzstyle{knoten}=[circle,draw,fill,minimum size=5pt,inner sep=2pt]
                    \node[knoten,label={[label distance=0cm]180:$v_1$}] (K-1) at (0,0) {};
                    \node[knoten,label={[label distance=0cm]0:$v_2$}] (K-2) at (2,0) {};
                    \node[knoten,label={[label distance=0cm]00:$v_3$}] (K-3) at (2,2) {};
                    \node[knoten,label={[label distance=0cm]180:$v_4$}] (K-4) at (0,2) {};
                    \node[knoten,fill=red,label={[label distance=-.1cm]90:}] (K-5) at (1,0) {};
                    \node[knoten,fill=red,label={[label distance=-.1cm]90:}] (K-6) at (0.5,1.5) {};
                    \foreach \i / \j in {1/4, 4/3, 2/3, 1/3}{
                        \path (K-\i) edge[-] (K-\j);
                    }
                    
                    \foreach \i in {1, 2, 3, 4}{
                        \path (K-\i) edge[-] (K-5);
                        \path (K-\i) edge[-] (K-6);
                    }
                    \path (K-5) edge[-] (K-6);

                    \end{tikzpicture}
                \caption{PVG $\Phi(G)$}
                \label{figure:transformationExampleB}
                \end{subfigure}
            \end{tabular}}
        \caption{Transformation of a general graph $G$ to a PVG $\Phi(G)$. The red vertices are the blockers introduced in the transformation $\Phi(G)$. }
        \label{fig:transformationExample}
    \end{figure}
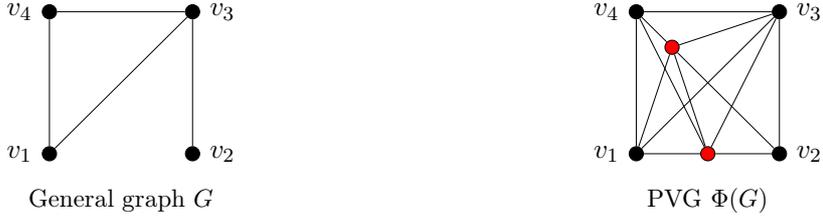

\subsection{Feedback Vertex Set}
We start with the well-known \textsc{Feedback Vertex Set} problem which is NP-hard on general graphs~\cite{GJ79}.

\problemdef{Feedback Vertex Set}
{A graph $G=(V,E)$ and an integer $k \in \mathbb{N}$.}
{Is there a subset $V' \subseteq V$ with $|V'| \leq k$ such that all cycles in~$G$ contain at least one vertex of~$V'$?}

We show that it is also NP-hard on \PVGS.

\begin{theorem}
\textsc{Feedback Vertex Set} on \PVGS\ is NP-hard.
\end{theorem}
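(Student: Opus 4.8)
The plan is to give a polynomial-time reduction from \textsc{Feedback Vertex Set} on general graphs, using the transformation~$\Phi$ together with a few auxiliary universal vertices. The intuition is that in~$\Phi(G)$ the blockers form a clique, and a clique of size at least~$3$ contains many short cycles that any feedback vertex set must destroy; so a solution is essentially forced to delete almost all blockers, leaving behind a graph very close to~$G$ itself. To make the bookkeeping clean, I would first pad the blocker clique (and, if helpful, also add a constant number of universal vertices adjacent to everything) so that the number of vertices that are ``cheap to keep'' is tightly controlled.

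Concretely, given an instance~$(G,k)$ of \textsc{Feedback Vertex Set} with $G=(V,E)$ and $n=|V|$, I would build~$\Phi(G)$, let $B$ denote its set of blockers (so $|B|=\binom n2-|E|=:m$), and output $(\Phi(G), k' )$ with $k' := k + (m-2)$ if $m\ge 2$ (and handle the trivial small cases separately). The forward direction: if $S\subseteq V$ is a feedback vertex set of~$G$ with $|S|\le k$, then $S \cup (B\setminus\{b_1,b_2\})$ for any two fixed blockers $b_1,b_2$ is a feedback vertex set of~$\Phi(G)$: after deleting it, the only remaining vertices are $(V\setminus S)\cup\{b_1,b_2\}$, and the induced graph is $G-S$ together with at most two extra vertices $b_1,b_2$ each adjacent to everything; since $G-S$ is a forest, adding two universal vertices to a forest still cannot create... wait — that is false in general, so this is exactly where care is needed. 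Adding even one universal vertex to a forest can create cycles. This is the main obstacle, and it is why~$\Phi$ alone is not obviously enough.

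The fix is to argue the other way: instead of tolerating leftover blockers, force the solution to delete \emph{all} of them. For this I would augment the construction by adding a small gadget that makes keeping any blocker expensive. A clean way is to attach to the blocker clique a few more universal vertices (which keep the graph a PVG, by the observation in the excerpt that universal vertices can be added to any PVG): add three new universal vertices $u_1,u_2,u_3$ adjacent to every vertex of $\Phi(G)$ and to each other. Now $B\cup\{u_1,u_2,u_3\}$ is a clique of size $m+3\ge 3$, so any feedback vertex set must leave at most two of its vertices. Set $k' := k + (m+1)$. If one keeps two of the $u_i$'s and deletes all of~$B$, the leftover graph is $G-S$ plus at most two universal vertices $u_i,u_j$ joined by an edge — still a triangle through any edge of $G-S$, hence not necessarily a forest. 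So even this is not immediate; the genuinely correct approach is to ensure the leftover ``extra'' part forms an independent set and has size at most one, or to make the base graph $G$ triangle-free so that one extra universal vertex is harmless. I would therefore also preprocess~$G$: subdivide every edge of~$G$ once (this is a standard FVS-preserving operation on general graphs, changing the parameter by a controlled amount), so that the graph fed into~$\Phi$ is triangle-free and in fact has girth at least... this still does not fully save us for two universal vertices joined by an edge.

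The cleanest route, which I would ultimately follow, is: reduce from \textsc{Feedback Vertex Set} on graphs~$G$ that are triangle-free (NP-hardness is preserved, e.g.\ via edge subdivision); build $\Phi(G)$, which has blocker set $B$ of size $m=\binom n2-|E|$; and set $k' := k + (m-1)$. Forward: given a size-$\le k$ feedback vertex set $S$ of~$G$, delete $S\cup(B\setminus\{b\})$ for a single fixed blocker~$b$; the leftover graph is $(G-S)$ plus the single universal vertex~$b$; since $G-S$ is a forest and $b$ is a single vertex, a cycle through~$b$ would need two neighbors of~$b$ joined by a path in $G-S$ not through... no — a cycle $b$–$x$–$y$–$b$ exists whenever $xy$ is an edge of $G-S$, and $G-S$ (being a nonempty forest, if it has any edge) does have edges. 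So the leftover is \emph{not} a forest, and we must also delete one endpoint of every edge of $G-S$ — which is absurd. Hence keeping even one blocker is forbidden, so we must delete all of~$B$, giving $k' := k+m$, leftover graph exactly $G-S$, a forest: correct. Backward: given a feedback vertex set $S'$ of $\Phi(G)$ with $|S'|\le k+m$, note $S'$ must contain all but at most two vertices of the clique $B$; if $|B|\ge 3$ then $|S'\cap B|\ge |B|-2 = m-2$. The two blockers possibly kept are each universal in $\Phi(G)$, so for the leftover graph to be acyclic it must contain no two adjacent vertices at all outside those blockers together with them — forcing $S'$ to in fact delete almost everything; a short counting argument then extracts from $S'$ a feedback vertex set of $G$ of size at most $k$. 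Writing out this backward counting carefully, together with the $|B|\le 1$ or $|B|=2$ edge cases (where $\Phi(G)$ is already small or trivially handled), is the bulk of the work; the main obstacle throughout is controlling the effect of the universal blocker vertices, which collapse forests into graphs full of triangles, and the resolution is precisely that this makes keeping any blocker so costly that the reduction becomes rigid.
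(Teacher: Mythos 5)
After several false starts your proposal converges on exactly the paper's reduction: $G':=\Phi(G)$ with $k':=k+|B|$, and the same key observations drive correctness (at most two blockers can survive since $B$ is a clique; one surviving universal blocker forces the remaining part of $G$ to be edgeless, two force it to be vertex-free, and a short count then yields a size-$k$ feedback vertex set of $G$). This is essentially the paper's own argument — the paper just writes out the final counting (via the contrapositive) that you leave as a sketch — so the approach is correct and not genuinely different.
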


\begin{proof}
Given a \textsc{Feedback Vertex Set} instance $(G,k)$, we construct an instance~$(G',k')$ of \textsc{Feedback Vertex Set} on \PVGS\ as follows. 
We set $G':=\Phi(G)$ and $k':=k + |B|$, where~$B$ is the set of blockers that where introduced by the transformation $\Phi$.

Let~$(G,k)$ be a yes-instance and let~$v_1,\ldots,v_k$ be vertices in~$G$ such that $G-\{v_1,\ldots,v_k\}$ does not contain a cycle. 
Then, removing the vertices $v_1, \ldots,v_k$ along with all the blockers~$B$ from $\Phi(G)$ yields the acyclic graph $G-\{v_1,\ldots,v_k\}$. Hence,~$(G',k')$ is a yes-instance.

If~$(G,k)$ is a no-instance, then we need to remove at least $k+1$ vertices to delete all cycles in~$G$. Hence, there are more than~$k+2$ vertices in~$G$ since, otherwise, deleting~$k$ arbitrary vertices removes all cycles.
Moreover, deleting any set of~$k+1$ vertices in~$G$ always leaves an edge in the remaining graph since, otherwise, we could only delete~$k$ of these vertices and still remove all cycles.

Now note that we always have to delete at least~$|B|-2$ of the blockers in~$\Phi(G)$ since 
all blockers are connected to all other vertices and any three of them form a cycle.
However, we know that we cannot delete all blockers because we need at least~$k+1$ vertex deletions in order to delete all cycles in~$G$.
If at least one blocker remains, then we can delete at most~$k+1$ vertices in~$G$. But then, there still exists an edge in~$G$ which forms a cycle with the one remaining blocker.
If two blockers remain, then we can delete at most~$k+2$ vertices in~$G$. Hence, there still remains a vertex in~$G$ forming again a cycle with the two remaining blockers.
Thus, $(G',k')$ is also a no-instance.
\end{proof}

\subsection{Longest Induced Path}
In the following we define the length of a path as the number of edges it contains.

\problemdef{Longest Induced Path}
{A graph $G$ and an integer $k \in \mathbb{N}$.}
{Is there an induced path with at least~$k$ edges in~$G$?}

This problem is NP-hard on general graphs~\cite{GJ79} and remains NP-hard on \PVGS. 

\begin{theorem}
\textsc{Longest Induced Path} on \PVGS\ is NP-hard.
\end{theorem}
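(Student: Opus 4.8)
The plan is to reduce from \textsc{Longest Induced Path} on general graphs, using the same transformation~$\Phi$, so that the induced-path structure of~$G$ is preserved inside~$\Phi(G)$ up to a controllable amount of interference from the blockers. Given an instance~$(G,k)$, I would output~$(\Phi(G), k')$ for a suitably chosen constant-shift parameter~$k'$ (I expect~$k' = k$, possibly adjusted by a small additive constant after the analysis of how one universal-type vertex can be appended). The key point to exploit is that~$G$ is an induced subgraph of~$\Phi(G)$, so every induced path in~$G$ is also an induced path in~$\Phi(G)$; this gives the easy direction immediately.

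For the reverse direction I would analyze how an induced path~$P$ in~$\Phi(G)$ can use blocker vertices. Since the blockers form a clique and each blocker is adjacent to every other vertex of~$\Phi(G)$, a blocker in an induced path behaves like a universal vertex: an induced path can contain at most one blocker, and if it contains a blocker~$b$ then~$b$ is adjacent to all vertices of~$P$, forcing~$P$ to have at most three vertices total (the blocker plus at most its two path-neighbors, which must themselves be non-adjacent). Hence any induced path in~$\Phi(G)$ with more than, say, two edges uses no blockers at all and therefore lies entirely within the induced subgraph~$G$. I would then argue: if~$k$ is large enough (we may assume~$k \ge 3$, since short cases are decidable in polynomial time), an induced path with at least~$k$ edges in~$\Phi(G)$ must avoid all blockers and is thus an induced path with at least~$k$ edges in~$G$. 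Combining the two directions yields the equivalence, and since~$\Phi$ is polynomial-time computable (as noted in the excerpt) and~$\Phi(G)$ is always a PVG, NP-hardness follows.

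The main obstacle I anticipate is making the blocker-exclusion argument airtight and deciding the exact value of~$k'$: one has to rule out paths that toggle in and out of the blocker clique, and handle the boundary cases where~$k$ is tiny. The cleanest fix is probably to preprocess so that~$k \ge 4$ (solve small instances by brute force) and then show that every induced path of length~$\ge 3$ in~$\Phi(G)$ is blocker-free, giving a direct correspondence with~$k' = k$. A subtlety worth checking is whether one wants to use the ``add a universal vertex while preserving the PVG property'' observation from the excerpt to squeeze out an extra edge — I would first attempt the bare~$\Phi(G)$ construction and only invoke the universal vertex if the parameter bookkeeping needs it. I would also double-check that~$G$ being an induced subgraph of~$\Phi(G)$ truly means induced paths are preserved in both directions, which it does since no new edges are added among the original vertices.
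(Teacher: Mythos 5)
Your proposal is correct and matches the paper's proof essentially verbatim: handle $k\le 2$ by brute force, reduce via $(G,k)\mapsto(\Phi(G),k)$, and observe that a blocker is universal in $\Phi(G)$ and hence cannot lie on any induced path of length at least three, so long induced paths in $\Phi(G)$ coincide with those in $G$. Your hesitation about shifting $k'$ or adding a universal vertex is unnecessary --- the threshold $k\ge 3$ already suffices, exactly as you suspected.
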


\begin{proof}
  Let~$(G,k)$ be an instance of \textsc{Longest Induced Path}.
  If~$k\le 2$, then we solve the instance in polynomial time and return a trivial constant-size yes- or no-instance accordingly.
  For~$k\ge 3$, we construct the instance $(\Phi(G),k)$.

A blocker is connected to all other vertices and forms a cycle of length three with each pair of vertices that are adjacent to each other.
Thus, no blocker lies on any induced path of $\Phi(G)$ of length at least three.
That is, any induced path of~$\Phi(G)$ of length at least three is an induced path in~$G$.
Consequently, $(G,k)$ is a yes-instance if and only if~$(\Phi(G),k)$ is a yes-instance.
\end{proof}

\subsection{Bisection}
The \textsc{Bisection} problem is to partition the vertices of a graph into two equally sized disjoint subsets such that the number of edges between these two vertex subsets is minimized.
The problem is formally defined as follows.

\problemdef{Bisection}
{A graph $G=(V,E)$ and an integer $k \in \mathbb{N}$.}
{Is there a partition $(U,W)$ of $V$, that is, $U, W \subseteq V$, $U \cap W = \emptyset$, and $U \cup W = V$, such that $|U| = |W|$ and $|\{\{u,w\}\in E \mid u \in U \wedge w \in W\}| \leq k$? }

On \PVGS\ the \textsc{Bisection} problem translates to partitioning the points in the plane into
equal-size subsets such that the number of pairs of points from both subsets that can see each other is minimized.
To show NP-hardness of this problem on \PVGS, we use an idea by \citet{GAREY1976237} who showed NP-hardness of \textsc{Bisection} by reducing from \textsc{Max Cut}. 

\problemdef{Max Cut}
{A graph $G=(V,E)$ and an integer $k \in \mathbb{N}$.}
{Is there a partition $(U, W)$ of $V$ such that $|\{\{u,w\}\in E \mid u \in U \wedge w \in W\}| \geq k$?} 

\begin{theorem}
\textsc{Bisection} on \PVGS\ is NP-hard.
\end{theorem}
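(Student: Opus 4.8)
The plan is to follow the strategy of Garey, Johnson, and Stockmeyer for proving \textsc{Bisection} NP-hard, namely reducing from \textsc{Max Cut}, but applying the transformation~$\Phi$ to land inside the class of \PVGS. Starting from a \textsc{Max Cut} instance~$(G,k)$ with~$G=(V,E)$ and~$n = |V|$, I would first pad~$G$ so that it has an even number of vertices (adding one isolated vertex if needed; this changes neither the answer nor, essentially, the structure), and then consider a graph~$H$ obtained from (the padded)~$G$ together with~$\Phi$ applied in such a way that the resulting \PVG\ has the property that \emph{every} balanced bipartition has a predictable number of cut edges coming from the ``dense'' part, so that minimizing the total cut is equivalent to maximizing the cut inside the copy of~$G$. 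Concretely, the classical trick is: take $G' := \Phi(G \cup \overline{G})$ or, more simply, form the disjoint union of~$G$ with a complementary copy so that the whole vertex set is ``balanced against itself''; then a bisection of the big graph is forced (up to cheap adjustments) to split the~$G$-part and the complementary part each exactly in half, and the edges of the dense blocker clique contribute a fixed amount independent of which balanced split is chosen, while the edges inside~$G$ contribute $|E|$ minus the cut, so fewer cut edges overall corresponds to a larger cut in~$G$.

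The key steps, in order, would be: (1) set up the target \PVG~$G'=\Phi(\text{something built from }G)$ and the bound~$k'$, choosing the construction so that~$|V(G')|$ is even and so that in any balanced partition the number of blocker--blocker and blocker--vertex cut edges is the same for all balanced partitions (this is where the clique structure of the blockers, all mutually adjacent and adjacent to every original vertex, is exploited); (2) prove that \emph{any} minimum-size bisection must split each designated half of~$V(G')$ evenly — this is the combinatorial core and uses a counting argument on the number of edges incident to the blocker clique, exactly as in the Garey--Johnson--Stockmeyer argument, to show that an unbalanced split within a half would be strictly more expensive; (3) given (2), express the cut size of a balanced partition as $(\text{constant depending only on }|V|,|E|) - (\text{size of the induced cut in }G)$, so that~$(G',k')$ is a yes-instance of \textsc{Bisection} iff~$(G,k)$ is a yes-instance of \textsc{Max Cut}; and (4) note that~$G'$ is a \PVG\ by the properties of~$\Phi$ recalled above, and that the whole reduction is computable in polynomial time since~$\Phi$ adds only $\binom{|V|}{2}-|E|$ blockers and a polynomial number of edges.

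I expect the main obstacle to be step~(2): controlling the cut contribution of the blocker clique across \emph{all} balanced partitions, not just the ``intended'' one. Because the blockers form a clique and are universal to the original vertices, the number of edges crossing a partition~$(U,W)$ is very sensitive to how many blockers land on each side, and I need to argue that the intended split — equal numbers of blockers on each side and equal numbers of original vertices on each side — strictly minimizes this contribution, with a margin large enough that it cannot be recovered by cleverly cutting fewer edges inside~$G$ (whose edge count is $o(|B|^2)$ once enough blockers are present; if not, one may first blow up~$G$ by a polynomial factor or pad with extra universal vertices, using the observation in the excerpt that a \PVG\ stays a \PVG\ after adding a universal vertex). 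Once the blocker clique is large enough that its crossing-edge count dominates, a convexity/counting estimate pins down the balanced split as the unique minimizer up to the free choice of how to distribute the original vertices, and the equivalence with \textsc{Max Cut} on~$G$ follows. The remaining steps are then routine bookkeeping.
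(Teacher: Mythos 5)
There is a genuine gap, and it is exactly at the place you flag as the ``main obstacle.'' The paper's reduction avoids that obstacle entirely by one small but essential move that your proposal is missing: it applies~$\Phi$ to the \emph{complement}~$\overline{G}$ of the \textsc{Max Cut} instance, not to~$G$ or to~$G \cup \overline{G}$. After padding $\Phi(\overline{G})$ with enough universal vertices so that the added vertices number at least~$|V|$ and the total vertex count is even, the resulting \PVG~$G'$ has the property that its \emph{only} non-edges are exactly the edges of~$G$ (blockers and padding vertices are universal, and two original vertices are adjacent in~$G'$ iff they are non-adjacent in~$G$). Consequently, for \emph{every} balanced partition~$(U',W')$ of~$G'$, the number of crossing edges is identically $\bigl(\tfrac{1}{2}|V'|\bigr)^2$ minus the number of $G$-edges between $U'\cap V$ and $W'\cap V$ --- no claim about how an optimal bisection distributes blockers or original vertices is needed, and setting $k' := (\tfrac{1}{2}|V'|)^2 - k$ gives the equivalence in both directions by pure bookkeeping. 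Your step~(2), the ``combinatorial core'' forcing minimum bisections to split each designated half evenly, simply does not arise.

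Moreover, the constructions you actually propose do not work as stated. If you apply~$\Phi$ to~$G$ itself, then a balanced partition putting $v_1$ and $v_2$ original vertices on the two sides cuts $(\tfrac{1}{2}|V'|)^2 - v_1v_2 + \mathrm{cut}_G(V_1,V_2)$ edges, so minimizing the bisection means maximizing $v_1v_2 - \mathrm{cut}_G(V_1,V_2)$, which is the cut of~$\overline{G}$, not of~$G$; your assertion that ``the edges inside~$G$ contribute $|E|$ minus the cut'' has the sign backwards (crossing $G$-edges contribute \emph{positively}), and chasing that sign error leads you straight back to the complementation you omitted. The alternative $\Phi(G \cup \overline{G})$ is worse: the disjoint union introduces $|V|^2$ cross-copy non-edges whose contribution depends heavily on how the two copies are distributed, destroying the clean accounting. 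So the right fix is not a convexity or counting estimate on the blocker clique, but complementing~$G$ before applying~$\Phi$, exactly as in the Garey--Johnson--Stockmeyer reduction the paper adapts.
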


\begin{proof}
Given a \textsc{Max Cut} instance $(G=(V,E),k)$, we construct an instance $(G',k')$ of \textsc{Bisection} on \PVGS\ as follows. 
To obtain $G'$, we first take the complement graph $\overline G$ of $G$ and apply the transformation $\Phi$. We obtain $\Phi(\overline G)$ with~$B$ being the set of blockers introduced by~$\Phi$. 
If $|B| < |V|$, then we add~$|V|-|B|$ additional vertices to $\Phi(\overline G)$ and connect them to all other vertices.
If~$|B| \ge |V|$ and~$|B|+|V|$ is odd, then we add another universal vertex to get an even number of vertices.
As mentioned before, adding such universal vertices does not destroy the \PVG\ property. 
We obtain the \PVG\ $G'=(V',E')$ with~$V'=V\cup B'$, where~$B'$ is the set of all added vertices (original blockers and additional universal vertices).
Finally, we set $k':=(\frac{1}{2}|V'|)^2 - k$. 

If~$(G,k)$ is a yes-instance of \textsc{Max Cut}, then there exists a partition $(U,W)$ of $V$ such that $|\{\{u,w\}\in E \mid u \in U \wedge w \in W\}| \geq k$. 
In the complement graph~$\overline G$ it holds that there are at least~$k$ edges missing between the vertices in~$U$ and~$W$. 
Now we choose a subset~$U'\subseteq (U\cup B')$ with~$U\subseteq U'$ and a subset $W'\subseteq (W\cup B')$ with $W\subseteq W'$ such that $|U'|=|W'|=\frac{1}{2}|V'|$ and $U' \cup W' = V'$. 
Informally speaking, we fill up the vertex sets~$U$ and~$W$ with the vertices in~$B'$ to obtain two disjoint equally sized vertex sets~$U'$ and $W'$. This is always possible since~$|B'|\ge |V|$ by construction.
Clearly, it holds $|\{\{u,w\} \in E' \mid u \in U' \wedge w \in W'\}| \leq (\frac{1}{2}|V'|)^2-k $
and hence~$(G',k')$ is a yes-instance.

If~$(G',k')$ is a yes-instance, then there exist two vertex sets $U'$ and $W'$ with $U' \cup W' = V'$, $|U'| = |W'|$, and $\{\{u,w\}\in E'\mid u \in U'\wedge w \in W'\} \leq k' = (\frac{1}{2} |V'|)^2 - k$. 
Since the vertices in~$B'$ are universal vertices, the~$k$ missing edges between~$U'$ and~$W'$ can only be between vertices of~$\overline G$. 
Set $U := U' \cap V$ and $W :=  W'\cap V$. Clearly, we have $U \cup W= V$ and $\{\{u,w\}\in E\mid u \in U \wedge w \in W\} \geq k$. Hence,~$(G,k)$ is a yes-instance.
\end{proof}

\subsection{$\mathcal{F}$-free Vertex Deletion}
\label{sec:FfreeDeletion}
In this section, we consider a general graph problem called \textsc{$\mathcal{F}$-free Vertex Deletion}.
In the following, $\mathcal{F}$ denotes a finite set of graphs, $K_i$, $i \in \mathbb{N}$, denotes the complete graph on~$i$~vertices, and~$K_{i,j}$, $i,j\in\mathbb{N}$ denotes the complete bipartite graph with two partite sets containing~$i$ and $j$ vertices, respectively. We say that a graph~$G$ is \emph{$\mathcal{F}$-free} if no $H \in \mathcal{F}$ occurs as an induced subgraph of~$G$. The \textsc{$\mathcal{F}$-free Vertex Deletion} problem is then defined as follows:

\problemdef{$\mathcal{F}$-free Vertex Deletion}
{A graph $G=(V,E)$ and an integer $k \in \mathbb{N}$.}
{Is there a subset of vertices $X \subseteq V$ with $|X| \leq k$ such that the graph $G-X := G[V\setminus X]$ obtained by deleting all vertices in~$X$ from~$G$ is $\mathcal{F}$-free?}

This is a generic graph problem that generalizes (depending on the choice of~$\mathcal{F}$) various fundamental graph problems such as \textsc{Vertex Cover} ($\mathcal{F}=\{K_2\}$) or \textsc{Cluster Vertex Deletion} ($\mathcal{F}=\{P_3\}$, that is, $\mathcal{F}$ consists of a single path on three vertices). In a vertex-deletion problem we are given a graph~$G$ and an integer~$k$ and we want to decide whether we can delete at most~$k$ vertices from~$G$ such that the resulting graph has a certain fixed property~$\Pi$. \citet{LEWIS1980219} showed that vertex-deletion problems are NP-hard whenever $\Pi$ is a \emph{nontrivial} property, that is, there are infinitely many graphs satisfying~$\Pi$ and infinitely many graphs not satisfying~$\Pi$. Hence, \textsc{$\mathcal{F}$-free Vertex Deletion} is NP-hard on general graphs if there are infinitely many $\mathcal{F}$-free graphs and infinitely many graphs that are not $\mathcal{F}$-free.

For point visibility graphs, we prove the following theorem.
\begin{theorem}
  \label{thm:F-free}
  \textsc{$\mathcal{F}$-free Vertex Deletion} is NP-hard on \PVGS\ in each of the three individual cases where
  \begin{enumerate}[(i)]
  	\item\label{no comp} $\mathcal{F}$ contains no complete graphs,
        \item\label{Kt} $\mathcal{F}$ contains a $K_t$ with~$t\ge 3$ and no graph in $\mathcal{F}$ can be made~$K_{t-1}$-free with less than two vertex deletions, 
  	\item\label{KtK1t} $\mathcal{F}$ contains $K_t$ and $K_{1,t'}$ with $t \geq 3$, $t' \geq 2$ and no $K_2$-free graphs.
  \end{enumerate} 
\end{theorem}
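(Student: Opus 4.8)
Throughout, for a finite family $\mathcal{G}$ of graphs let $\mathrm{del}_{\mathcal{G}}(G)$ denote the minimum size of a set $X\subseteq V(G)$ with $G-X$ being $\mathcal{G}$-free, and write $U(H)$ for the set of universal vertices of a graph $H$. The plan is, in each case, to reduce a suitable NP-hard vertex-deletion problem on general graphs to \textsc{$\mathcal{F}$-free Vertex Deletion} on \PVGS\ via the transformation $\Phi$, together with two further ingredients: the freedom to attach universal vertices (which preserves the \PVG\ property) and, when needed, a small gadget applied to the source instance. Note that $\Phi(G)$ is exactly the join $G\ast K_{|B|}$ of $G$ with the clique on the $|B|=\binom{|V|}{2}-|E|$ blockers, so $G\ast K_m$ is a \PVG\ for every $m\ge|B|$. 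The structural fact behind all three cases is that $H$ is an induced subgraph of $G\ast K_m$ if and only if $V(H)$ splits into a clique $A\subseteq U(H)$ with $|A|\le m$ and a set inducing a copy of $H-A$ in $G$; hence a deletion set in $G\ast K_m$ retaining exactly $s$ of the $K_m$-vertices leaves $(G-X)\ast K_s$ (with $X$ the deleted part of $V(G)$), which is $\mathcal{F}$-free iff $G-X$ is $\mathcal{F}_s$-free, where $\mathcal{F}_s:=\{\,H-A : H\in\mathcal{F},\ A\subseteq U(H),\ |A|\le s\,\}$. Since $K_s$ contains $K_t$ once $s\ge t$, only $s\le t-1$ blockers may be retained, so
\[
  \mathrm{del}_{\mathcal{F}}(\Phi(G))\;=\;\min_{0\le s\le t-1}\Big((|B|-s)+\mathrm{del}_{\mathcal{F}_s}(G)\Big),
\]
and all the work lies in controlling this minimum.

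\emph{Case (i).} As $\mathcal{F}$ contains no complete graph, for every $H\in\mathcal{F}$ the graph $H-U(H)$ has at least two vertices and no universal vertex, so $\mathcal{F}':=\{H-U(H):H\in\mathcal{F}\}$ consists of non-complete, universal-vertex-free graphs; hence every complete graph is $\mathcal{F}'$-free, ``$\mathcal{F}'$-free'' is a nontrivial hereditary property, and \textsc{$\mathcal{F}'$-free Vertex Deletion} on general graphs is NP-hard by \citet{LEWIS1980219}. I would reduce it to \PVGS\ by sending $(G,k)$ to $(G\ast K_m,\,k)$ with $m$ chosen larger than $k+\max_{H\in\mathcal{F}}|V(H)|$ and at least $|B|$: when $s\ge\max_{H}|V(H)|$, $\mathcal{F}_s$-freeness coincides with $\mathcal{F}'$-freeness (every $H-A\in\mathcal{F}_s$ contains $H-U(H)\in\mathcal{F}'$), so an $\mathcal{F}'$-free deletion set of $G$ of size $\le k$ works for $G\ast K_m$, and conversely any deletion set of size $\le k<m-\max_H|V(H)|$ in $G\ast K_m$ retains at least that many $K_m$-vertices and thus restricts on $V(G)$ to an $\mathcal{F}'$-free deletion set of size $\le k$. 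The only role of ``no complete graph in $\mathcal{F}$'' is to make a large blocker clique harmless.

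\emph{Cases (ii) and (iii).} Now $K_t\in\mathcal{F}$, so the blocker clique cannot be enlarged, and since the blockers are joined to all of $V(G)$, once a vertex of $G$ survives at most $t-2$ blockers survive. In (ii), the hypothesis that no $H\in\mathcal{F}$ becomes $K_{t-1}$-free with fewer than two deletions says exactly that $H-v\supseteq K_{t-1}$ for every $v\in V(H)$; a short induction then yields that every member of $\mathcal{F}_s$ contains $K_{t-s}$ while $K_{t-s}=K_t-A\in\mathcal{F}_s$, so $\mathcal{F}_s$-freeness coincides with $K_{t-s}$-freeness and $\mathrm{del}_{\mathcal{F}_s}(G)=\mathrm{del}_{\{K_{t-s}\}}(G)$ for $1\le s\le t-1$ (in particular $K_s\notin\mathcal{F}$ for $s\le t-1$, and $\mathrm{del}_{\mathcal{F}_{t-2}}$ is the vertex-cover number). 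I would then reduce from \textsc{$K_{t-1}$-free Vertex Deletion} (NP-hard on general graphs, e.g.\ \textsc{Vertex Cover} for $t=3$), first applying a gadget $G\mapsto\hat G$ — taking, say, the disjoint union of $G$ with a large graph of clique number $t-2$ (which is automatically $\mathcal{F}$-free, since in case (ii) every $H\in\mathcal{F}$ has clique number at least $t-1$) — engineered so that in the displayed minimum for $\Phi(\hat G)$ the configurations $s\ne1$ are at least as expensive as $s=1$, and setting $k':=|B_{\hat G}|-1+k$. In (iii), ``$\mathcal{F}$ contains no $K_2$-free graph'' makes every edgeless graph $\mathcal{F}$-free, so ``$\mathcal{F}$-free'' is nontrivial and \textsc{$\mathcal{F}$-free Vertex Deletion} is NP-hard on general graphs; moreover $K_{1,t'}\in\mathcal{F}$ forces a surviving blocker to forbid $t'$ pairwise non-adjacent surviving vertices of $G$, so for $s\ge1$ an $\mathcal{F}_s$-free graph is both $K_{t-s}$-free and $\overline{K_{t'}}$-free, hence has fewer than a fixed Ramsey number $R$ of vertices. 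Thus $\mathrm{del}_{\mathcal{F}_s}(G)\ge|V(G)|-R$ for $s\ge1$, so once $G$ has an independent set larger than $(t-2)+R$ — arranged by a harmless padding, since edgeless graphs are $\mathcal{F}$-free — every solution of $\Phi(G)$ deletes all blockers, and the reduction sends $(G,k)$ to $(\Phi(G),\,k+|B|)$.

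\emph{Main obstacle.} The delicate part is the bookkeeping in cases (ii) and (iii): because the blockers form a clique \emph{and} are universal to $V(G)$, one must choose the source problem, the gadget/padding, and the budget $k'$ so that $\min_{s}\big((|B|-s)+\mathrm{del}_{\mathcal{F}_s}(\cdot)\big)$ is provably attained at the intended value of $s$ and equals $k'$ exactly on yes-instances. Concretely, for (ii) one must design $\hat G$ so that the $s=0$ configuration (plain $\mathcal{F}$-deletion, which need not dominate $K_{t-1}$-deletion since $K_{t-1}\notin\mathcal{F}$) and the $s\ge2$ configurations ($K_{t-s}$-deletion for the smaller cliques) are all no cheaper than the $s=1$ configuration. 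Verifying that the hypotheses on $\mathcal{F}$ in (ii) and (iii) are simultaneously strong enough to force this and weak enough to keep the source problems NP-hard is the heart of the proof; the structural characterization of induced subgraphs of $G\ast K_m$, the nontriviality arguments, and the explicit gadget/padding constructions are routine.
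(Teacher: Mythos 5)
Your general framework (viewing $\Phi(G)$ as the join $G\ast K_{|B|}$ and tracking the families $\mathcal{F}_s$) is sound, and your cases (i) and (iii) essentially reproduce the paper's arguments: (i) is the paper's reduction from $\mathcal{F}'$-free deletion with extra universal vertices, and (iii) is the paper's Ramsey argument forcing all blockers into the solution (though note two slips there: your threshold for ``large enough'' must depend on $k$, since you need $(|B|-s)+(n-R)>k+|B|$, and padding $G$ with isolated vertices is \emph{not} harmless in general, because a graph $H\in\mathcal{F}$ may itself have isolated vertices, so $G+\overline{K_m}$ can contain members of $\mathcal{F}$ that $G$ does not; the paper instead brute-forces the case $n<k+R(t,t')$).

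The genuine gap is in case (ii). You reduce from \textsc{$K_{t-1}$-free Vertex Deletion} and therefore must prove that the minimum of $(|B|-s)+\mathrm{del}_{\mathcal{F}_s}(\hat G)$ is attained at $s=1$; you acknowledge this but never construct a gadget that achieves it, and your proposed padding (disjoint union with a large graph of clique number $t-2$) only handles $s\ge 2$. It does nothing about $s=0$: there the cost is $|B|+\mathrm{del}_{\mathcal{F}}(\hat G)$, and since every $H\in\mathcal{F}$ contains $K_{t-1}$ we always have $\mathrm{del}_{\mathcal{F}}\le\mathrm{del}_{\{K_{t-1}\}}$, with an unbounded gap in general (already for $\mathcal{F}=\{K_3\}$, triangle deletion can be far cheaper than vertex cover), so the minimum can sit at $s=0$ and your budget $k'=|B|-1+k$ decides the wrong problem. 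The paper sidesteps this entirely: it reduces from \textsc{$\mathcal{F}$-free Vertex Deletion} itself (NP-hard here because edgeless graphs are $\mathcal{F}$-free and large cliques are not), maps $(G,k)$ to $(\Phi(G),k+|B|)$, and shows that \emph{any} solution $X$ with $t'\ge 1$ surviving blockers projects to a size-$k$ solution $X'\subseteq X\setminus B$ of $G$. The key step you are missing is the following: since $G-(X\setminus B)$ is $K_{t-t'}$-free, every $K_{t-1}$ in $G-X'$ must contain all $t'$ vertices of $Y=(X\setminus B)\setminus X'$, so deleting any single vertex of $Y$ makes $G-X'$ $K_{t-1}$-free; by the hypothesis that no member of $\mathcal{F}$ can be made $K_{t-1}$-free with fewer than two deletions, $G-X'$ therefore contains no member of $\mathcal{F}$ at all. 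This removes any need to control where the minimum over $s$ is attained.
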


\begin{proof}
  We prove the three cases separately.
\begin{compactenum}[(i)]

\item %
Let~$\mathcal{F'}$ be the set resulting from~$\mathcal{F}$ by removing all universal vertices (that is, vertices that are adjacent to all other vertices) from every graph in~$\mathcal{F}$.
Clearly, there are infinitely many graphs that are not $\mathcal{F'}$-free (e.g.~by adding vertices to any graph in~$\mathcal{F'}$).
All complete graphs, however, are~$\mathcal{F'}$-free.
Therefore, $\mathcal{F'}$-free Vertex Deletion is NP-hard on general graphs~\cite{LEWIS1980219}.
We give a reduction from \textsc{$\mathcal{F'}$-free Vertex Deletion} on general graphs to \textsc{$\mathcal{F}$-free Vertex Deletion} on PVGs.

Given an instance~$(G,k)$, we construct the instance~$(G',k)$, where~$G'$ is the \PVG\ obtained by adding $k + \ell$ additional universal vertices to~$\Phi(G)$, where~$\ell$ is the maximum number of universal vertices of any graph in $\mathcal{F}$. Note that $\ell$ is a fixed constant (for every~$\mathcal{F}$) and, hence, we can add these vertices in polynomial time. As mentioned before, adding universal vertices to a PVG yields another PVG.

Assume that there is a size-$k$ vertex subset $X \subseteq V$ such that~$G - X$ is $\mathcal{F'}$-free.
We claim that $G' - X$ is $\mathcal{F}$-free.
If $G' - X$ contains $H\in \mathcal{F}$ as an induced subgraph,
then all vertices of that subgraph that are not universal are also contained in $G - X$.
This implies that $G - X$ contains an induced subgraph~$H' \in \mathcal{F'}$ obtained from~$H$ by removing all universal vertices, which is a contradiction.
Therefore, $G' - X$ is indeed $\mathcal{F}$-free.

Now assume that~$G'$ contains a size-$k$ vertex set~$X$ such that $G' - X$ is $\mathcal{F}$-free.
Since~$G'$ contains at least $k+\ell$ universal vertices, it follows that~$G' - X$ contains at least~$\ell$ universal vertices.
Therefore, $G' - X$ contains all the required universal vertices for every graph $H \in \mathcal{F}$.
This implies that for all $H \in \mathcal{F}$, the graph $G' - X$ does not contain a copy of~$H'\in\mathcal{F'}$, where~$H'$ is again obtained from~$H$ by removing all universal vertices.
Hence, also $G - X$ is $\mathcal{F'}$-free.

\item Note that~$\mathcal{F}$ contains no~$K_2$-free graphs. Hence, all edgeless graphs are $\mathcal{F}$-free. Moreover, every complete graph with at least~$t$ vertices is not $\mathcal{F}$-free, implying that \textsc{$\mathcal{F}$-free Vertex Deletion} is NP-hard on general graphs~\cite{LEWIS1980219}.
We reduce the problem on general graphs to its restriction on PVGs.
Given an instance~$(G,k)$, we construct the instance $(\Phi(G),k':=k+|B|)$, where $B$ is the set of blockers introduced in $\Phi(G)$.

Assume that $G$ contains a vertex subset~$X\subseteq V$ of size~$k$ such that $G-X$ is $\mathcal{F}$-free. Removing~$X$ along with all blockers~$B$ from $\Phi(G)$ clearly yields an $\mathcal{F}$-free graph.

Now, let $\Phi(G)$ contain a vertex subset~$X$ of size $k+|B|$ such that $\Phi(G)-X$ is $\mathcal{F}$-free.
If~$B\subseteq X$, then $X\setminus B$ is a set of~$k$ vertices such that~$G-(X\setminus B)$ is $\mathcal{F}$-free.
If~$B\not\subseteq X$, then it holds that~$1 \le t':=|B\setminus X| < t$ since otherwise the blockers contain a~$K_t$.
Also, it follows that~$G-(X\setminus B)$ is~$K_{t-t'}$-free since all blockers are universal vertices.
Let~$X'\subseteq (X\setminus B)$ contain arbitrary~$k$ vertices from~$X\setminus B$.
Note that~$G-X'$ is~$K_t$-free and can be made~$K_{t-t'}$-free by deleting at most~$|(X\setminus B) \setminus X'| = k+t' - k= t'$ vertices. Hence,~$G-X'$ can be made~$K_{t-1}$-free by at most one vertex deletion.
It follows that $G-X'$ cannot contain any graph in~$\mathcal{F}$.

\item Note that all edgeless graphs are $\mathcal{F}$-free and all complete graphs with at least~$t$ vertices are not. Hence, $\mathcal{F}$-free Vertex Deletion is NP-hard on general graphs~\cite{LEWIS1980219}. We reduce the general case to the restriction on PVGs.

  Let~$(G,k)$ be the input instance.
  By Ramsey's theorem, there exists a number $R(t,t')$ such that every graph with at least~$R(t,t')$ vertices contains a~$K_t$ or an edgeless induced subgraph with~$t'$ vertices~\cite{Ramsey30}.
  First, assume that $n < k + R(t,t')$.
  Then, there are $$\binom{n}{k} \le \binom{k+R(t,t')}{k} = \binom{k+R(t,t')}{R(t,t')}\in O\left(k^{R(t,t')}\right)$$ possible ways to choose a subset of~$k$ vertices.
  Since~$R(t,t')$ is a constant, we can solve the input instance in polynomial time by brute force and output a trivial yes- or no-instance.
  If $n\geq k+ R(t,t')$, then we construct the instance $(\Phi(G),k':=k+|B|)$, where $B$ is the set of blockers introduced in $\Phi(G)$.
  
First, assume that $G$ contains a size-$k$ vertex subset~$X$ such that $G-X$ is $\mathcal{F}$-free. Then, again, removing $X$ along with all blockers from $\Phi(G)$ yields an $\mathcal{F}$-free graph.

Now assume that $X$ is a vertex set of size $k+|B|$ such that $\Phi(G) - X$ is $\mathcal{F}$-free. First, we claim that $X$ contains all blockers in $\Phi(G)$.
The graph~$\Phi(G) - X$ contains at least $n+|B|- (k + |B|) = n-k \geq R(t,t')$ vertices.
Since it is~$K_t$-free, it follows that it contains~$t'$ pairwise non-adjacent vertices.
These~$t'$ vertices have to be from $G$, because blockers are universal.
If $\Phi(G) - X$ additionally contains a blocker, then there exists a~$K_{1,t'}$, which is not possible.
Therefore, $X$ contains all blockers and only~$k$ vertices from~$G$ and $G - (X \setminus B)$ is $\mathcal{F}$-free.
\end{compactenum}\end{proof}

We remark that Case~(ii) of \Cref{thm:F-free} subsumes the case that~$\mathcal{F}$ contains only complete graphs (excluding~$K_1$). %

\section{Dominating Set on Point Visibility Graphs}
\label{section:dominatingset}
In this section, we focus on the following NP-hard problem~\cite{GJ79}:

\problemdef{Dominating Set}
{A graph $G=(V,E)$ and a parameter $k \in \mathbb{N}$}
{Is there a set $D \subseteq V$ with $|D| \leq k$ such that every vertex is contained or has at least one neighbor in~$D$?}

\noindent\textsc{Dominating Set} in PVGs can be interpreted as a guarding problem in which the vertices represent places to be observed and we want to select a small number of observation posts among them that see all other places (also known as the \emph{art gallery problem}~\cite{Ghosh07}). 

Notably, the transformation~$\Phi$ which has been used in \autoref{section:nphardness} to show NP-hardness for several graph problems does not work for \textsc{Dominating Set}:
If the input graph~$G$ is not complete, then at least one blocker will be added to~$G$.
Then, regardless of the input graph~$G$, the \PVG~$\Phi(G)$ always has a dominating set of size one containing a single blocker.
In fact, the complexity of \textsc{Dominating Set} on \PVGS\ remains unresolved.
However, for a restricted subclass of \PVGS, there exists some indication that it is unlikely to be NP-hard, which we will briefly discuss.
To this end, we define the subclass of grid point visibility graphs.

\begin{definition}
	An $n \times m$ grid point visibility graph (GPVG) is a \PVG\ that is induced by the point set $P=\{(x,y) \mid 1 \leq x \leq n\ \wedge 1 \leq y \leq m\}$. For $n = m$, we call the graph a \textit{square} GPVG.
\end{definition}

Interestingly, for the point set that induces an $n \times m$ GPVG, it is not hard to see that two points $p_1 = (x_1,y_1)$ and $p_2 = (x_2,y_2)$ are visible to each other if and only if the numbers $|x_1-x_2|$ and $|y_1-y_2|$ are relatively prime.
Two integers~$a, b \in \mathbb{N}$ are \emph{relatively prime} if there is no prime number that divides both of them.
Therefore, the structure of GPVGs can be analyzed using number theory.
For example, the following bounds on the size of a minimum dominating set are already known:

\begin{theorem}[\citet{Abbott1974199}]
\label{theorem:abbott}
Let $f(n)$ be the size of a minimum dominating set of the $n\times n$ GPVG. Then, for sufficiently large~$n$, it holds 
\begin{align*} 
\frac{\log(n)}{2\log\log(n)} < f(n) < 4 \log(n).
\end{align*}
\end{theorem}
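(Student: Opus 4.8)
The plan is to prove the two inequalities separately, both relying on the characterization recalled above: two grid points $(x_1,y_1)$ and $(x_2,y_2)$ see each other precisely when $|x_1-x_2|$ and $|y_1-y_2|$ are relatively prime; equivalently, they fail to see each other exactly when some prime $p$ divides both differences (recall $p\mid 0$ for every $p$). Throughout, $\log$ denotes the natural logarithm and all claims are to be read for $n$ sufficiently large.

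For the upper bound $f(n)<4\log n$ I would use the probabilistic method. Choose $r:=\lceil 3\log n\rceil$ grid points independently and uniformly at random. Fix a grid point $(x,y)$. As $a$ ranges over $\{1,\dots,n\}$ the difference $x-a$ ranges bijectively over $n$ consecutive integers, so $\Pr[\,p\mid x-a\,]\le \lceil n/p\rceil/n\le 1/p+1/n$, and likewise for $y-b$. Hence a single random point $(a,b)$ fails to dominate $(x,y)$ with probability at most $\sum_{p\le n}(1/p+1/n)^2\le \sum_{p}p^{-2}+O((\log\log n)/n)\le 0.46+o(1)<\tfrac12$, using $\sum_p p^{-2}\approx 0.4522$; primes $p>n$ cannot witness non‑coprimality of the two differences unless the two points coincide, so they contribute nothing. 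Therefore $\Pr[(x,y)\text{ undominated}]\le (1/2)^{r}$, and a union bound over the $n^2$ grid points gives total failure probability at most $n^2(1/2)^r=n^{2-3\log 2}\to 0$ since $3\log 2>2$. So for large $n$ some choice of the $r$ points dominates the grid, giving a dominating set of size at most $\lceil 3\log n\rceil<4\log n$.

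For the lower bound $f(n)>\frac{\log n}{2\log\log n}$ I would show that no set $D$ with $r:=|D|\le \frac{\log n}{2\log\log n}$ dominates the grid. Write $D=\{(a_1,b_1),\dots,(a_r,b_r)\}$, let $p_1<\dots<p_r$ be the first $r$ primes, and put $P_r:=p_1\cdots p_r$ (the $r$‑th primorial). Since the $p_i$ are pairwise coprime, the Chinese Remainder Theorem yields $(x_0,y_0)$, unique modulo $P_r$, with $x_0\equiv a_i$ and $y_0\equiv b_i\pmod{p_i}$ for all $i$. Every grid point $(x,y)$ congruent to $(x_0,y_0)$ modulo $P_r$ then satisfies $p_i\mid x-a_i$ and $p_i\mid y-b_i$ for all $i$, so $p_i$ is a common prime divisor of $|x-a_i|$ and $|y-b_i|$; hence $(x,y)$ sees no point of $D$ unless it equals one of them. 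It remains to check that this residue class contains more than $r$ grid points inside $\{1,\dots,n\}^2$. Writing $\vartheta(x)=\sum_{p\le x}\log p$, we have $\log P_r=\vartheta(p_r)$, and using the explicit Chebyshev‑type bounds $\vartheta(x)<1.02\,x$ and $p_r<r(\log r+\log\log r)$ (the finitely many small $r$ being handled trivially since then $P_r$ is a constant), together with $r\le \log n$, one gets $\log P_r\le \bigl(\tfrac12+o(1)\bigr)\log n$, so $P_r\le n^{0.6}$ for large $n$. Consequently the residue class meets $\{1,\dots,n\}^2$ in at least $\lfloor n/P_r\rfloor^2\ge\tfrac14 n^{0.8}>r$ points, at least one of which lies outside $D$ and is therefore undominated — a contradiction. (This argument in fact yields $f(n)=(1-o(1))\log n/\log\log n$, so the factor $\tfrac12$ in the statement leaves substantial slack.)

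The main obstacle in both parts is number‑theoretic rather than combinatorial. For the lower bound it is controlling the growth of the primorial, $\log P_r=\vartheta(p_r)\sim p_r\sim r\log r$, together with the (easy but necessary) bookkeeping that guarantees the Chinese‑remainder point can be chosen outside $D$. For the upper bound it is only the statement that two integers drawn uniformly from two intervals of $n$ consecutive integers are coprime with probability bounded below by an absolute constant, which follows from the convergence of $\sum_p p^{-2}$.
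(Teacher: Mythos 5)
The paper does not prove this statement; it is imported verbatim from Abbott's 1974 paper and used as a black box, so there is no in-paper proof to compare against. Your argument is, as far as I can check, a correct and complete proof of both inequalities. The upper bound via the probabilistic method is sound: the per-point failure probability is bounded by $\sum_{p\le n}\bigl(\tfrac1p+\tfrac1n\bigr)^2\le \sum_p p^{-2}+o(1)<\tfrac12$, and $n^2 2^{-\lceil 3\log n\rceil}=n^{2-3\log 2}\to 0$ since $3\log 2\approx 2.079>2$; you also correctly dispose of the degenerate cases (primes exceeding $n$, coinciding points, zero differences). The lower bound via CRT and the primorial is the standard (and, I believe, essentially Abbott's original) argument, and your bookkeeping — $\log P_r=\vartheta(p_r)\le(\tfrac12+o(1))\log n$ under $r\le \log n/(2\log\log n)$, hence at least $\tfrac14 n^{0.8}>r$ grid points in the blind residue class, one of which avoids $D$ — is right, including the often-overlooked step of ensuring the undominated witness lies outside $D$ itself. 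Two small remarks: your closing parenthetical asserts $f(n)=(1-o(1))\log n/\log\log n$ with an equality sign, but your argument only delivers the lower bound $f(n)\ge(1-o(1))\log n/\log\log n$; the upper bound you prove remains $O(\log n)$, so the asymptotic equality is not justified by what you wrote. Also, the explicit constants you invoke ($\vartheta(x)<1.02x$, $p_r<r(\log r+\log\log r)$) are genuine theorems (Rosser--Schoenfeld type bounds) but should be cited if this were to stand as a self-contained proof.
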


Due to the upper bound in \autoref{theorem:abbott}, we can conclude that \textsc{Dominating Set} is unlikely to be NP-hard on square GPVGs since this implies that all problems in NP are solvable in quasi-polynomial time, that is, in time $2^{(\log n)^c}$ for some constant~$c$, where $n$ is the input size, which contradicts the \emph{Exponential Time Hypothesis}~\cite{IP01}.

\begin{corollary}
If \textsc{Dominating Set} on square GPVGs is NP-hard, then every problem in NP can be solved in quasi-polynomial time.
\end{corollary}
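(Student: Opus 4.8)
The plan is to derive this directly from the upper bound $f(n) < 4\log(n)$ in \autoref{theorem:abbott}, combined with a brute-force search over candidate dominating sets. The key observation is that on a square GPVG, a minimum dominating set has size only $O(\log n)$, so one can enumerate all vertex subsets of this size in quasi-polynomial time.

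First I would make the input size explicit. An instance of \textsc{Dominating Set} on square GPVGs is specified by the integer $n$ together with the parameter $k$; the GPVG itself has $N := n^2$ vertices, and the whole instance can be described in $\Theta(\log n)$ bits (or, if the graph is given explicitly, in $\mathrm{poly}(n)$ bits). Either way the input size is polynomially related to $n$, so it suffices to give an algorithm running in time $2^{(\log n)^{c}}$ for some constant $c$. Next I would argue that we may assume $n$ is large (for the finitely many small values of $n$, or for instances where the bound of \autoref{theorem:abbott} has not yet kicked in, solve by table lookup in constant time). For large $n$, \autoref{theorem:abbott} guarantees that the minimum dominating set of the $n \times n$ GPVG has size $f(n) < 4\log(n)$. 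Hence if $k \ge 4\log(n)$ we may immediately answer \emph{yes}; and if $k < 4\log(n)$, a dominating set of size at most $k$ exists if and only if one of the $\binom{N}{k}$ subsets of at most $k$ vertices is dominating.

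The algorithm then simply enumerates all subsets $D \subseteq V$ with $|D| \le k < 4\log(n)$ and checks for each whether it dominates the GPVG, using the visibility criterion that $(x_1,y_1)$ and $(x_2,y_2)$ see each other iff $|x_1 - x_2|$ and $|y_1 - y_2|$ are relatively prime; each such check takes $\mathrm{poly}(N) = \mathrm{poly}(n)$ time. The number of subsets is bounded by
\begin{align*}
\binom{N}{k} \;\le\; N^{k} \;=\; n^{2k} \;<\; n^{8\log n} \;=\; 2^{8(\log n)^{2}},
\end{align*}
so the total running time is $2^{O((\log n)^2)} \cdot \mathrm{poly}(n) = 2^{(\log n)^{O(1)}}$, i.e.\ quasi-polynomial in the input size. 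Since \textsc{Dominating Set} on square GPVGs is assumed NP-hard, a quasi-polynomial algorithm for it yields, via the polynomial-time many-one reduction, a quasi-polynomial algorithm for every problem in NP.

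The only subtlety — and it is genuinely minor — is bookkeeping around what counts as "the input size" and making sure the logarithms are taken with respect to that size rather than with respect to $N = n^2$; since $\log N = 2\log n$, this only changes the constant $c$, so there is no real obstacle. One should also state clearly that the comparison $k \ge 4\log(n)$ can be evaluated in polynomial time (it is a comparison of an integer with a logarithm, decidable by elementary estimates), and that the finitely many exceptional small values of $n$ not covered by "sufficiently large" in \autoref{theorem:abbott} are handled by a constant-size lookup table. With these trivialities dispatched, the corollary follows.
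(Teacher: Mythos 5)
Your proposal is correct and follows essentially the same route as the paper: invoke the $4\log n$ upper bound from \autoref{theorem:abbott}, solve \textsc{Dominating Set} on square GPVGs by brute force over all subsets of size $O(\log n)$ in time $2^{O((\log n)^2)}$, and compose this with the assumed NP-hardness reduction to place all of NP in quasi-polynomial time. Your additional remarks on input encoding and small $n$ match the caveat the paper itself makes immediately after its proof.
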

\begin{proof}
  Given an $n \times n$ GPVG, we know by~\autoref{theorem:abbott} that the size of an optimal dominating set is upper-bounded by~$4 \log n$.
  Hence, an optimal solution can be found by brute-force in quasi-polynomial time $O(n^{4\log n} \cdot \text{poly}(n))$. 
If \textsc{Dominating Set} is indeed NP-hard on GPVGs, then every problem in NP is quasi-polynomial-time solvable by reducing it first in polynomial time to~\textsc{Dominating Set} on a square GPVG which can then be solved by the above brute-force algorithm.
\end{proof}
\noindent Note that the encoding of the graph is crucial above: We could encode a square GPVG in a single integer but then the running time of the brute-force algorithm would not be quasi-polynomial in the input length anymore.  

Another corollary is related to fixed-parameter tractability. A problem is \emph{fixed-parameter tractable} with respect to some parameter~$k$, an integer-valued function of the input, if it admits an algorithm with running time~$f(k) \cdot n^c$, where $c$ is a constant and $n$ the input size. We obtain fixed-parameter tractability on GPVGs with respect to the size of the dominating set. Note that \textsc{Dominating Set} is W[2]-hard on general graphs and thus presumably not fixed-parameter tractable, see \citet{DF13}.
    
\begin{corollary}
\textsc{Dominating Set} on square GPVGs is fixed-parameter tractable with respect to the sought solution size.
\end{corollary}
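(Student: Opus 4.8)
The plan is a win/win (Bodlaender--Downey style) argument resting entirely on \autoref{theorem:abbott}. The parameter is $k$, the sought dominating-set size, and an $n\times n$ GPVG is encoded as an explicit graph, so the input length is at least $n^2$ and $n$ can be read off from the vertex count. The one fact we need is that the function $f(n)$ grows without bound: by the lower bound $f(n)>\frac{\log n}{2\log\log n}$, for every $k$ there is a least integer $N(k)$ such that every $n'\ge N(k)$ is ``sufficiently large'' in the sense of \autoref{theorem:abbott} and satisfies $\frac{\log n'}{2\log\log n'}>k$. Since $\frac{\log n'}{2\log\log n'}\to\infty$ and the ``sufficiently large'' caveat hides only an absolute constant, $N(k)$ is a computable function of $k$.

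Given an instance $(G,k)$ with $G$ an $n\times n$ GPVG, I would first compute $n$ and $N(k)$ and compare them. If $n\ge N(k)$, then \autoref{theorem:abbott} gives $f(n)>\frac{\log n}{2\log\log n}>k$, so the minimum dominating set of $G$ has more than $k$ vertices and the algorithm immediately outputs ``no''. If $n<N(k)$, then $n$, and hence the whole graph (which has $n^2$ vertices), is bounded by a function of $k$ alone; the algorithm then enumerates all $\binom{n^2}{k}\le\binom{N(k)^2}{k}$ vertex subsets of size $k$ and checks in polynomial time — directly in $G$, or via the relatively-prime visibility characterization of GPVGs — whether any of them is dominating, answering accordingly (and, if desired, returning such a set).

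Correctness is immediate: in the first case from the lower bound of \autoref{theorem:abbott}, and in the second case from exhaustiveness of the enumeration. For the running time, the first case costs $O(n^2)$ (to read $n$) plus the computation of $N(k)$; the second case costs $\binom{N(k)^2}{k}\cdot\mathrm{poly}(N(k))$, which depends on $k$ only. Hence the total running time is of the form $g(k)\cdot\mathrm{poly}(\text{input length})$, which is exactly fixed-parameter tractability with respect to $k$.

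I do not expect a genuine obstacle; the two points requiring a little care are (a) that the threshold $N(k)$ is \emph{effectively} computable, which is what makes the ``no''-branch legitimate rather than merely a non-constructive dichotomy, and (b) — as the paper already remarks for the preceding corollary — that bounding $n$ by a function of $k$ is only useful because a GPVG is encoded as a graph on $n^2$ vertices rather than by the single integer $n$; under the latter encoding the enumeration branch would be exponential in the input length and the argument would collapse.
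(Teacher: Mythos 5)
Your proposal is correct and follows essentially the same route as the paper: use the lower bound of \autoref{theorem:abbott} to answer ``no'' whenever $k$ falls below $\frac{\log n}{2\log\log n}$, and otherwise bound $n$ by a function of $k$ and brute-force over all size-$k$ subsets. Your write-up is somewhat more careful than the paper's about the effective computability of the threshold and the ``sufficiently large'' caveat, but the underlying argument is identical.
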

\begin{proof}
  Given a \textsc{Dominating Set} instance $(G,k)$, where~$G$ is an $n \times n$ GPVG,
  we know that if $k < \frac{\log(n)}{2\log\log(n)}$, then we can answer ``no'' due to the lower bound in \autoref{theorem:abbott}. Otherwise, we have~$k \ge \frac{\log(n)}{2\log\log(n)}$ and consequently $n \le g(k)$ for some function~$g$.
  This implies that \textsc{Dominating Set} on square GPVGs is fixed-parameter tractable with respect to~$k$ since we can check every possible solution in a running time only depending on~$k$.
\end{proof}

\begin{figure}[!t]
  \begin{center}
	\begin{tikzpicture}[scale=.5] 
                \tikzstyle{knoten}=[circle,draw,fill,inner sep=2pt]

                \draw (-3,0) -- (4,0);
                \draw (-3,-3) -- (4,4);
                \draw (-1.5,-3) -- (2,4);
                \draw (0,-3) -- (0,4);
                
                \node[knoten] at (0,0) {};
                
                \node[knoten,color=red] at (1,1) {};
                \node[knoten]  at (2,2) {};
                \node[knoten]  at (3.5,3.5) {};
                
                \node[knoten,color=red]  at (1,2) {};
                \node[knoten]  at (1.3,2.6) {};
                \node[knoten]  at (1.7,3.4) {};
                
                \node[knoten]  at (0,1.5) {};
                \node[knoten]  at (0,3.3) {};

                \node[knoten]  at (2,0) {};
                \node[knoten]  at (3.7,0) {};

	\end{tikzpicture}
      \begin{tikzpicture}[scale=.5]
                \tikzstyle{knoten}=[circle,draw,fill,inner sep=2pt]

                \draw (-3,0) -- (4,0);
                \draw (-3,-3) -- (4,4);
                \draw (-1.5,-3) -- (2,4);
                \draw (0,-3) -- (0,4);
                \draw (-3,1.5) -- (4,-2);
                
                \node[knoten] at (0,0) {};
                
                \node[knoten,color=red] at (1,1) {};
                \node[knoten]  at (2,2) {};
                \node[knoten]  at (3.5,3.5) {};
                
                \node[knoten,color=red]  at (1,2) {};
                \node[knoten]  at (1.3,2.6) {};
                \node[knoten]  at (1.7,3.4) {};
                
                \node[knoten]  at (0,1.5) {};
                \node[knoten]  at (0,3.3) {};

                \node[knoten,color=red]  at (2,0) {};
                \node[knoten]  at (3.7,0) {};

                \node[knoten]  at (1,-0.5) {};
                \node[knoten]  at (2,-1) {};
                \node[knoten]  at (2.6,-1.3) {};

	\end{tikzpicture}
      \begin{tikzpicture}[scale=.5]
        \tikzstyle{knoten}=[circle,draw,fill,inner sep=2pt]

        \draw (-3,0) -- (4,0);
        \draw (-3,-3) -- (4,4);
        \draw (-1.5,-3) -- (2,4);
        \draw (0,-3) -- (0,4);
        \draw (-3,1.5) -- (4,-2);
        \draw (-3,3) -- (3,-3);
        
        \node[knoten] at (0,0) {};
        
        \node[knoten,color=red] at (1,1) {};
        \node[knoten]  at (2,2) {};
        \node[knoten]  at (3.5,3.5) {};
        
        \node[knoten,color=red]  at (1,2) {};
        \node[knoten]  at (1.3,2.6) {};
        \node[knoten]  at (1.7,3.4) {};
        
        \node[knoten]  at (0,1.5) {};
        \node[knoten]  at (0,3.3) {};

        \node[knoten]  at (2,0) {};
        \node[knoten]  at (3.7,0) {};

        \node[knoten,color=red]  at (1,-0.5) {};
        \node[knoten]  at (2,-1) {};
        \node[knoten]  at (2.6,-1.3) {};

        \node[knoten,color=red]  at (1.1,-1.1) {};
        \node[knoten]  at (1.5,-1.5) {};
        \node[knoten]  at (2.5,-2.5) {};

	\end{tikzpicture}
      \begin{tikzpicture}[scale=.5]
        \tikzstyle{knoten}=[circle,draw,fill,inner sep=2pt]

        \draw (-3,0) -- (4,0);
        \draw (-3,-3) -- (4,4);
        \draw (-1.5,-3) -- (2,4);
        \draw (0,-3) -- (0,4);
        \draw (-3,1.5) -- (4,-2);
        \draw (-3,3) -- (3,-3);
        \draw (-2,4) -- (1.5,-3);
        
        \node[knoten] at (0,0) {};
        
        \node[knoten,color=red] at (1,1) {};
        \node[knoten]  at (2,2) {};
        \node[knoten]  at (3.5,3.5) {};
        
        \node[knoten,color=red]  at (1,2) {};
        \node[knoten]  at (1.3,2.6) {};
        \node[knoten]  at (1.7,3.4) {};
        
        \node[knoten]  at (0,1.5) {};
        \node[knoten]  at (0,3.3) {};

        \node[knoten]  at (2,0) {};
        \node[knoten]  at (3.7,0) {};

        \node[knoten]  at (1,-0.5) {};
        \node[knoten]  at (2,-1) {};
        \node[knoten]  at (2.6,-1.3) {};

        \node[knoten,color=red]  at (1.1,-1.1) {};
        \node[knoten]  at (1.5,-1.5) {};
        \node[knoten]  at (2.5,-2.5) {};

        \node[knoten,color=red]  at (0.8,-1.6) {};
        \node[knoten]  at (1.2,-2.4) {};

	\end{tikzpicture}	
  \end{center}
\caption{Dominating sets (red points) within the neighborhood of a minimum degree vertex (center point) for degrees~4, 5, 6, and 7.}
\label{fig:dscsps}
\end{figure}
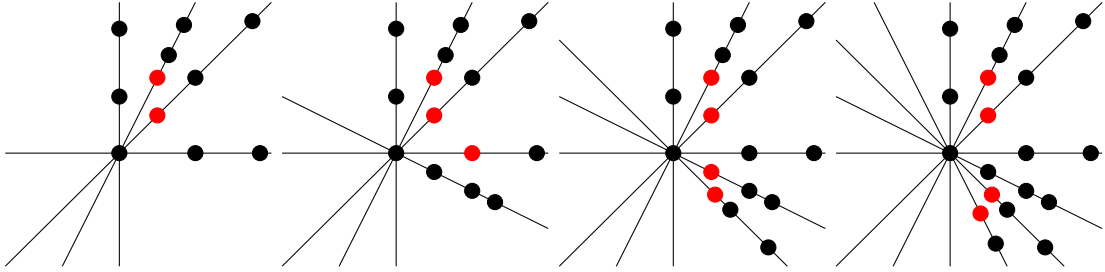

We show that there also is an upper bound for the \textsc{Dominating Set} problem on general \PVGS\ depending on the minimum vertex degree in the \PVG. In the following, we say a PVG is a \emph{non-path} PVG if it is not a path. Our result is based on the observation
that the neighborhood of any vertex in a non-path \PVG\ is a dominating set since every non-path \PVG\ has diameter two.
We can even tighten up this observation by taking a closer look at the visibility embedding of a non-path \PVG.

\begin{theorem}
  \label{thm:mindegree_bound}
For every non-path \PVG\ with minimum vertex degree~$\delta$, there exists a dominating set of size $\lfloor\frac{\delta}{2}\rfloor + 1$. 
\end{theorem}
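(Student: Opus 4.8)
The plan is to start from the dominating set given by the open neighbourhood of a minimum-degree vertex and prune it by roughly a factor of two, using the visibility embedding.

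Fix a visibility embedding of the (non-path) \PVG\ and let $p$ be the point realising a vertex $v$ that has degree $\delta$, so that $p$ sees exactly $\delta$ other points and $|N(v)|=\delta$. Since a non-path \PVG\ has diameter two, $N(v)$ is already a dominating set of size $\delta$; it therefore suffices to produce a dominating set of the form $\{v\}\cup S$ with $S\subseteq N(v)$ and $|S|\le\lfloor\frac{\delta}{2}\rfloor$, i.e.\ a set $S$ of at most $\lfloor\frac{\delta}{2}\rfloor$ neighbours of $v$ such that every vertex at distance exactly two from $v$ has a neighbour in $S$ (every vertex of $N[v]$ being dominated by $v$ itself). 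To organise $N(v)$, I would look at the lines through $p$ that contain at least one further point of $P$. Along such a line $\ell$, the points of $P$ distinct from $p$ lie on the two rays emanating from $p$, and $p$ sees precisely the point of $P$ nearest to it on each non-empty ray; hence $\ell$ contributes one or two vertices to $N(v)$, and $\delta$ is the total of these contributions over all such lines. This splits the neighbours of $v$ into \emph{collinear pairs} $\{a,b\}$ --- with $a,b$ on opposite sides of $p$ along one line and $p$ strictly between them, so that $a\not\sim b$ --- and into \emph{singletons}.

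The second step is the selection. Informally, one keeps one well-chosen representative per collinear pair together with representatives on a suitable subset of the singleton lines, so that the kept representatives jointly cover each other's ``hidden'' vertices (the points of $P$ lying behind the nearest point along a ray out of $p$), and one adds $v$; a global counting argument then bounds the number of representatives by $\lfloor\frac{\delta}{2}\rfloor$. Checking correctness decomposes into an easy part and a hard part. The easy facts are: a discarded partner of a collinear pair is itself a neighbour of $v$, hence dominated by $v$; the point of $P$ immediately behind a kept representative on its ray is adjacent to that representative; and every neighbour of $v$ adjacent to a kept representative is dominated. The hard part, which I expect to be the main obstacle, is a vertex $w$ lying on a line $\ell$ through $p$ \emph{beyond} the second point of $P$ on that ray: such a $w$ is invisible from $p$ and from all but one of the points of $P$ on $\ell$, so its domination cannot be arranged within $\ell$ and must be supplied by a representative chosen on a \emph{different} line --- which, being off $\ell$, can see past the collinear obstructions on $\ell$. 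The real work is to show that the representatives can be chosen so that all such hidden vertices are covered simultaneously without exceeding the budget $\lfloor\frac{\delta}{2}\rfloor$; the four configurations in \ref{fig:dscsps} (minimum degrees $4$, $5$, $6$, $7$) are meant to indicate how this choice is made. Once this is established, $\{v\}\cup S$ is a dominating set of size $\lfloor\frac{\delta}{2}\rfloor+1$.
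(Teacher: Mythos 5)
Your setup is sound and matches the paper's: every point of the embedding lies on one of the rays from $p$ through a neighbour of $v$, and the neighbours of $v$ are exactly the nearest points of $P$ on these rays. But the step you flag as ``the hard part'' is the entire content of the theorem, and your proposal leaves it open; as written this is not a proof but a reduction of the statement to an unproven covering claim. Moreover, your organising principle --- collinear pairs versus singleton lines --- is not the one that closes the gap. The key observation in the paper's proof is about the \emph{cyclic angular order} of the rays: if $L_1,\dots,L_\delta$ are the rays from $p$ through $p_1,\dots,p_\delta$ listed clockwise, then \emph{every} point on $L_i$ sees \emph{every} point on $L_{i-1}$ and on $L_{i+1}$. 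The reason is that a blocker on a segment joining a point of $L_i$ to a point of $L_{i+1}$ would lie strictly inside the angular wedge between the two rays; the point of $P$ nearest to $p$ on the ray through that blocker would be a further neighbour of $v$ whose ray lies strictly between $L_i$ and $L_{i+1}$, contradicting their adjacency in the clockwise order. This single lemma dissolves your ``hidden vertex'' problem: a point $w$ lying far out on $L_i$, invisible from $p$ and from most of $L_i$, is nonetheless dominated by $u_{i-1}$ or $u_{i+1}$.

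With that lemma the selection is pure arithmetic: choosing the angularly adjacent pair $u_i,u_{i+1}$ dominates everything on the four rays $L_{i-1},L_i,L_{i+1},L_{i+2}$ (and $v$ itself), so taking every other adjacent pair around the cycle --- e.g.\ $u_{4j-2},u_{4j-1}$ for $j=1,\dots,\lfloor\delta/4\rfloor$, plus one or two extra neighbours depending on $\delta\bmod 4$ --- yields a dominating set of size at most $\lfloor\frac{\delta}{2}\rfloor+1$, without needing to reserve a slot for $v$ at all. To salvage your write-up, replace the pair/singleton decomposition by the clockwise ordering of the rays and prove the adjacency--visibility lemma above; the counting you already anticipated then goes through.
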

\begin{proof}%
  Let~$G$ be a non-path \PVG\ with minimum vertex degree~$\delta\ge 2$ and let~$v$ be a vertex of degree~$\delta$. Let~$u_1,\ldots,u_\delta$ be the neighbors of~$v$. 
  Consider a visibility embedding of~$G$ with the point set~$P$, where~$v$ corresponds to the point~$p\in P$ and~$u_1,\ldots,u_\delta$ correspond to the points~$p_1,\ldots,p_\delta$.
  Then, all points in~$P$ lie on one of the at most~$\delta$ lines defined by the pair~$(p,p_i)$ for each~$i=1,\ldots,\delta$. See \autoref{fig:dscsps} for examples.
 Let~$L_i$ denote the line defined by~$p$ and~$p_i$ and assume that~$L_1,\ldots,L_\delta$ are in clockwise order. Then, each point on~$L_i$ is visible to all the points on~$L_{i-1}$ and~$L_{i+1}$ since otherwise there would be another neighbor of~$v$ corresponding to a point (and defining a line) in between~$L_{i-1}$ and~$L_i$ or~$L_i$ and~$L_{i+1}$.
  Hence, picking the vertices~$u_i$ and~$u_{i+1}$ into the dominating set dominates all vertices corresponding to points on the four lines~$L_{i-1},\ldots,L_{i+2}$.

  Now, if~$\delta \bmod 4 = 0$, then we can select the vertices~$u_{4i-1}$ and~$u_{4i-2}$ for each~$i=1,\ldots,\lfloor\frac{\delta}{4}\rfloor$. This yields a dominating set of size~$\frac{\delta}{2}$.
  If~$\delta \bmod 4 = 1$, then we additionally select the vertex~$u_\delta$ to obtain a dominating set of size~$\frac{\delta-1}{2}+1=\lfloor\frac{\delta}{2}\rfloor+1$.
  Finally, for the case~$\delta \bmod 4 \in\{2,3\}$, we additionally select the vertices~$u_{\delta-1}$ and~$u_\delta$ and obtain a dominating set of size~$\lfloor\frac{\delta}{2}\rfloor+1$.
\end{proof}
Note that~\autoref{thm:mindegree_bound} implies that \textsc{Dominating Set} is polynomial-time solvable on \PVGS\ with constant minimum vertex degree.%

We close this section by mentioning a conjecture by~\citet{Rourke:1987} stating that a logarithmic upper bound for the size of an optimal dominating set also holds for arbitrary \PVGS.
\begin{conjecture}[\cite{Rourke:1987}]
Every \PVG\ $G=(V,E)$ has a dominating set of size~$O(\log{|V|})$.
\end{conjecture}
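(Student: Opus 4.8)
The statement is a long-standing conjecture, so the plan can only be an attack together with an honest account of where it stalls. First I would observe that it must be read as a statement about \emph{non-path} \PVGS: a path on $n$ vertices is a \PVG\ with domination number $\lceil n/3\rceil$, so the bound is only meaningful for \PVGS\ of diameter two, and I restrict to that case. Fix a visibility embedding $P$ of $G$ and write $n=|V|=|P|$; everything below is phrased in terms of the geometry of $P$. The first step is a case distinction on the minimum degree $\delta$ of $G$. If $\delta=O(\log n)$, then \ref{thm:mindegree_bound} already produces a dominating set of size $\lfloor\delta/2\rfloor+1=O(\log n)$. If, at the other extreme, $\delta=\Omega(n)$, a probabilistic argument suffices: keep each vertex independently with probability $p=\Theta(\tfrac{\log n}{\delta})$; the bad event ``$N[v]\cap D=\emptyset$'' has probability at most $e^{-p(\delta+1)}$, a union bound over the $n$ vertices is less than $1$, and the usual repair step (add one vertex for every still-undominated vertex) yields a dominating set of size $O(\tfrac{n\log n}{\delta})=O(\log n)$. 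The grid case (\ref{theorem:abbott}) is a reassuring proof of concept that $O(\log n)$ is the correct order; there the coprimality description of visibility lets one build such a set from number-theoretic density estimates, but that structure is lost for general \PVGS.

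So the whole difficulty sits in the intermediate range $\omega(\log n)\le\delta\le o(n)$, where neither tool applies and where a generic graph of the same minimum degree could have domination number $\Theta(n/\delta)\gg\log n$; the \PVG\ structure must therefore be used essentially. My main line would be to prove a \emph{large-visibility lemma}: in every non-path \PVG\ on $n$ points some point sees at least $n/\mathrm{polylog}(n)$ of the others (ideally a constant fraction). Given such a lemma one finishes by iteration — repeatedly pick a maximum-degree point $q$ in the currently-undominated set $P'$, put it in the solution, and delete $N[q]\cap P'$ — so that each round removes a $1/\mathrm{polylog}(n)$ fraction of the remaining points and $O(\log^{O(1)}n)$ rounds (just $O(\log n)$ if the fraction is constant) exhaust $P$. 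To get a grip on the arrangement I would start from the point $p$ realizing the minimum degree: as in the proof of \ref{thm:mindegree_bound}, all of $P$ then lies on at most $\delta$ lines through $p$ with consecutive lines in angular order mutually fully visible, and one can try a charging or recursive argument on this pencil of concurrent lines. A secondary option is to bound the VC dimension of the set system $\{N[v]:v\in V\}$ and invoke $\varepsilon$-net machinery to turn a fractional dominating set into an integral one of comparable size, but this only reduces the problem to bounding the fractional domination number of a \PVG, which seems no easier.

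The step I expect to be the genuine obstacle is exactly the large-visibility lemma — which is presumably why the conjecture is still open. An adversary can spread the points over $k\approx\sqrt n$ nearly parallel lines carrying $\approx\sqrt n$ points each and try to place them so that most long-range visibilities are blocked by the intermediate lines, driving the maximum degree down toward $\sqrt n=o(n)$ and defeating any constant-fraction lemma; it is not even clear that the weaker $1/\mathrm{polylog}(n)$ version survives such ``hidden cluster'' configurations. Resolving this — either by proving that such configurations cannot simultaneously hide everything, or by showing that when they do occur one can dominate cheaply using the collinear sub-structure (consecutive points on a line are mutually visible, and a single point of one line sees an entire neighbouring line) — is the crux, and I would treat making that dichotomy precise as the main content of a full proof.
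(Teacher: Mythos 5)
This statement is presented in the paper as an \emph{open conjecture} attributed to O'Rourke; the paper supplies no proof of it, and your proposal does not supply one either, so there is nothing to match the attempt against --- only to assess on its own terms. To your credit, you say this explicitly, and the parts you do nail down are sound: the observation that the claim must be read for non-path \PVGS\ (a path has domination number $\lceil n/3\rceil$), the reduction of the case $\delta=O(\log n)$ to \autoref{thm:mindegree_bound}, and the standard probabilistic bound $O(\tfrac{n\log n}{\delta})$ for $\delta=\Omega(n)$ are all correct. Your framing of the remaining difficulty is also accurate: for $\omega(\log n)\le\delta\le o(n)$ neither tool gives $O(\log n)$, and the generic degree-based bound $\Theta(n/\delta)$ is far too weak, so any proof must exploit the visibility structure in an essential way.

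The genuine gap is exactly the one you name: the ``large-visibility lemma'' (some point sees an $n/\mathrm{polylog}(n)$ fraction of the others in every non-path \PVG) is not proved, and without it the greedy iteration has no guaranteed progress per round. Note also that even if such a lemma held, the iteration argument needs it to hold \emph{recursively} for the set of still-undominated points, which is in general not itself a \PVG\ under the induced visibility relation of the original embedding --- the blockers may be points you have already dominated and discarded --- so the induction hypothesis would have to be formulated for arbitrary subsets of a point set, which is a strictly harder statement. Your VC-dimension route runs into the analogous problem of bounding the fractional domination number, as you observe. In short: the proposal is a reasonable research plan with an honest account of where it stalls, but it does not establish the conjecture, which remains open both in the paper and after your attempt.
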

\section{Outlook}
\label{section:outlook}

In this paper we made some effort towards examining point visibility graphs from the viewpoint of algorithmic complexity.
We surveyed some structural properties of point visibility graphs and started to investigate the complexity of graph problems when restricted to such graphs.
Even though some problems turn out to be efficiently (sometimes even trivially) solvable in special cases, we showed that also several classical graph problems still remain NP-hard on point visibility graphs.
Our main goal was to initiate productive research on solving computational problems for this natural graph class.
Thus, we conclude with open questions and further directions for research.

The computational complexity of several graph problems when restricted to PVGs is yet to be determined.
Among them are for example \textsc{Dominating Set} (see \autoref{section:dominatingset}), \textsc{Max Cut} or $\mathcal{F}$\textsc{-free Vertex Deletion} (see \autoref{sec:FfreeDeletion}).
An interesting open problem is the case where~$\mathcal{F}$ contains a~$K_3$ and the graph consisting of a single edge and a single isolated vertex (that is, $K_2+K_1$).
A $\{K_3, K_2+K_1\}$-free graph is either~$K_2$-free or a complete bipartite graph.
It is open whether this problem is polynomial-time solvable on \PVGS.
Also, there are open cases when~$\mathcal{F}$ contains infinitely many graphs.
Note that \textsc{Feedback Vertex Set} corresponds to the infinite set~$\mathcal{F}$ containing all cycles. For infinitely many complete graphs, NP-hardness still holds by \autoref{thm:F-free}).

Furthermore, for those problems that are NP-hard on point visibility graphs, the existence of efficient approximation algorithms certainly is an interesting question. Another interesting line of research would be the parameterized complexity of graph problems restricted to PVGs.
More specifically, for those problems that are $W[1]$- or $W[2]$-hard on general graphs,
are they fixed-parameter tractable on PVGs?
Recall that we have seen that \textsc{Dominating Set} becomes fixed-parameter tractable with respect to the solution size on quadratic grid point visibility graphs.
As a final remark, we mention that the NP-hardness reduction by~\cite{Ghosh201517} for \textsc{Independent Set} on PVGs also proves that it is $W[1]$-hard with respect to the solution size.
However, for the \textsc{Clique} problem, the parameterized complexity is still open.

\paragraph{Acknowledgments}
This work is a result of the course ``Algorithmic Research in Teams'' held at TU Berlin during the summer term 2016.

\section*{Bibliography}
\bibliographystyle{abbrvnat}
\bibliography{sources}

\end{document}